\definecolor{shadecolor}{rgb}{0.9, 0.9, 0.9}
\newtheorem{axiom}{Axiom}[section]
\newtheorem*{theorem}{Theorem}
\newcommand{\bu}[1]{\mbox{$\mathbf{#1}$}}
\newcommand{\R}{\textnormal{\sffamily\bfseries R}\xspace}
\begin{document}

\setkeys{Gin}{width=0.8\textwidth}

\title{Uncertainty analysis and composite hypothesis under the likelihood paradigm}
\author{Andr\'e Chalom \and Paulo In\'acio de Knegt L\'opez de Prado,\\ 
  Department of Ecology, Institute of Biosciences, University of São Paulo \\
  \texttt{andrechalom@gmail.com}
  }
\date{07/28/2015}
\maketitle

\begin{abstract}
  The correct use and interpretation of models depends on several steps, two of which being the
  calibration by parameter estimation and the analysis of uncertainty. In the biological literature,
  these steps are seldom discussed together, but they can be seen as fitting pieces of the same puzzle.
  In particular, analytical procedures for uncertainty estimation may be masking a high degree of 
  uncertainty coming from a model with a stable structure, but insufficient data. 
  
  Under a likelihoodist approach, the problem of uncertainty estimation is closely related to the problem
  of composite hypothesis. In this paper, we present a brief historical background on the statistical 
  school of Likelihoodism, and examine the complex relations between the law of likelihood and the 
  problem of composite hypothesis, together with the existing proposals for coping with it. Then, we
  propose a new integrative methodology for the uncertainty estimation of models using the information
  in the collected data. We argue that this methodology is intuitively appealing under a likelihood 
  paradigm. 
\end{abstract}

\newpage
\vfill
\noindent
``In an ideal world, all data would come from well-designed experiments
and would be sufficient to simultaneously estimate all parameters using rigorous
statistical procedures. The world is not ideal. One must often combine estimates
from different experiments, or supplement high-quality data (...) with uncertain
data, or even assumptions (...). Think carefully about whether your conclusions
may be artifacts of your assumptions or calculations, and document those assumptions
and calculations so that your reader can ask the same question, and then carry on.''
\begin{flushright}
(H. Caswell, Matrix Population Models)
\end{flushright}
\vfill
\indent
\newpage
\tableofcontents
\newpage

\section{Motivation}

The use of mathematical and computational models is now widespread in several fields of the biological 
sciences. However, several applications of these models lack one or more steps that should be taken in
order to correctly understand and interpret its results \citep{Bart95}. Though seldom discussed together,
five steps can be seen as fundamental in this process: {\em verification, validation, calibration, 
uncertainty analysis} and {\em sensitivity analysis}. 

The two first steps are very tightly correlated, both in historical terms and in the common practice,
so it's common to refer to the set of validation and verification as ``V \& V''. We can think of verification
as trying to ensure that the computer code is correctly implementing the desired model, while validation
is trying to ensure that the theoretical model is capable of reproduce the actual phenomena of interest.
Another way to see it is to think that validation is determining that the model is solving the right 
equations, while verification is determining that the model is solving the equations right \citep{ASC2010}.

The third step, calibration, consists in determining the right values for the input parameters of a model
in order to enhance the correspondence between a model run and one observed situation. An appropriate
calibration of the model is then essential for the model to be used in a predictive fashion \citep{ASC2010}.

The last steps consist in determining how much the variation of the input parameters is translated into
the total variation of the results, which is called uncertainty analysis, and how much of the variation
of the results can be ascribed to the variation of each individual parameter, which is called sensitivity
analysis \citep{Helton03, Helton05}. These two have been extensively reviewed in another text 
(\citep{Chalom12}), so we will just highlight the fact that these analyses, together with model calibration,
strongly depend on previously and correctly executing the model verification and validation.

During the development and analysis of mathematical models in ecology, it is common to have completely 
separated procedures for calibration by parameter estimation (which we will call CPE) and 
sensitivity and uncertainty analyses (SUA) of a model. These steps are carried out in different sections of
the research papers, discussed in different chapters of books (ie: \citep{Caswell89}), and sometimes are
even performed by different people. In matrix models for population projections, the results discussion
is focused on a single combination of parameters, considered to be an optimum for the CPE, and the
SUA is presented as a separate step, being secondary to the main results (see, for example, 
\citep{SilvaMatos99}). We have no guarantees that even the statistical schools of though considered for 
both procedures will be coherent.

Moreover, the parameter set used in the model calibration may have few to no relation to parameters that
have biological relevance. In the case of matrix models, the former may be parameters related to a time 
series, while the later might be the projection matrix entries. This way, the researcher responsible for
the calibration may see the system as individual counts over time, while the researcher responsible for 
the uncertainty analyses will only be interested in the vital rates, which are real valued. When developed
in such disparate worlds, the SUA will be unable to point out directions for the planning of new experiments
or to single out which parameters should be targeted for more sampling effort.

This lack of integration between these fundamental steps for model analysis may have its roots in the
development of the CPE and SUA theories. While the parameter estimation has always been seen as belonging
to a statistical theory, the sensitivity of biological models was developed as an analytical tool, based on
the linear expansion of functions around a privileged point. Depending on the nature of the experiments
designed to measure vital rates, there may be a plethora of methods that might be used to determine the
set of values that best represent the status of knowledge that we have about a given natural system - and
these methods may stem from frequentist, Bayesian or likelihood-based approaches to the statistical theory.
The analytical theory of SUA, on the other hand, is in great part due to the work of Hal 
Caswell \citep{Caswell89}, and proceeds by taking a model that is already optimally parametrized, and studying
the first order derivatives of the model answer in relation to each input parameter. This formulation 
disregards completely the quantity and quality of collected data, except for their average, and sensitivity
measures are taken exclusively over the variation of the model answer in response to infinitesimal changes.
Thus, models that are parametrized by data with large uncertainties will not present larger sensitivity 
indexes than models parametrized by data with great precision. Increasing the sampling effort to gather
data about the system will not affect the uncertainty values. We can conclude that the analytical formulation
of the SUA, ultimately, gives information about the structure of the model, but is silent about the more
embracing questions from the collection of field data to the formulation and execution of the model. This might
lead to a false sense of confidence being placed on studies that present a stable matrix structure, leading
to small sensitivity indexes, but use data with large uncertainties.

Other than this, questions about model validation are not usually addressed by literature in this field,
but they are also intimately connected to the relevant questions about uncertainty. We can
divide the uncertainty about a model in three main sources: structural uncertainty, parameter uncertainty
and stochastic uncertainty (see \citep{Marino08} for a review about the latter two). While most
of the SUA techniques are focused on the second component, the validation of a model may point out
how much confidence we can have about whether one model (or a set of models) is adequate at
representing the desired phenomena.

A stochastic formulation for a theory of global uncertainty and sensitivity analyses, as described in 
\citep{Chalom12}, and interpreted under the likelihood paradigm for statistics, is able to regard
in the same way all of the uncertainty sources described above, generating a complete picture of our
knowledge about a system. Although these formulations may converge for very simple systems, this is not
the general case. 

The main advantages of using a stochastic formulation are:

\begin{enumerate}
  \item The analytical procedures are local, thus responding only to what happens after infinitesimal
    perturbations, and depends on the model functions being ``well-behaved'' in the chosen neighborhood.
    The stochastic procedures are global and have no such limitation.
  \item Any stochastic approach allows for the information contained in the sample variability to be used
    to represent the uncertainty about the parameters.
  \item By accepting the Likelihood Principle, we know that all relevant information on the samples is
    contained in the associated likelihood function. This way, we can use the likelihood function alone
    to proceed in the analysis, while the analytical procedures at times will disregard information, and
    at times will lead to the false sensation of having more detailed information than the sample allows.
\end{enumerate}

\section{Historical background}\label{sec:likelihood}

The statistical use of likelihoods is very widespread, starting from Fisher's definition on 1922, and
spanning several schools of thought. 
On the one hand, the framework for Neyman-Pearson's hypothesis testing is built upon it; on the other hand,
it is a bridge between the {\em a priori} and {\em a posteriori} probabilities in Bayesian analysis. 
Starting on the decade of 1960, the use of likelihoods as a basis for statistical inference started to be
more widely advocated as an alternative to both frequentism and Bayesianism. The 1965 book by Ian Hacking 
about the logic of inference and the 1972 book by A.W.F. Edwards on likelihoods can be seen as the starting
points for this position to become an independent school of thought, sometimes called Likelihoodism.
More recently, this position will be strongly supported by the works of Richard Royall and Elliott Sober.

The likelihoodist school of thought is based on the conjunction of the likelihood principle, as demonstrated
by Allan Birnbaum, and the law of likelihood, as enunciated by Ian Hacking.

Birnbaum derived the likelihood principle in 1962, from two generally accepted principles\footnote{
This demonstration is controversial to this date. See \citep{Mayo10} for a recent critique and 
\citep{Gandenberger12} for a recent defense}: the sufficiency principle and the conditionality principle.
Informally, we can think about the two first affirming that ``data that does not aggregate new information
is irrelevant to the inference'', and ``experiments that could have been realized but were not are irrelevant
to the inference''. From this principles, Birnbaum deduces the likelihood principle, which can informally
be stated as ``experimental results which are possible but were not observed are irrelevant to the 
inference''. Or, in Birnbaum's more formal terms:

\begin{quote}
``The likelihood principle: If $E$ and $E\prime$ are any two experiments with the same parameter space,
 represented respectively by density functions $f(x, \theta)$ and $g(y, \theta)$; and if $x$ and $y$ are
 any respective outcomes determining the same likelihood function; then $Ev(E, x) = Ev(E\prime, y)$.
 That is, the evidential meaning of any outcome $x$ of any experiment $E$ is characterized fully by giving
 the likelihood function $cf(x, \theta)$ (which need be described only up to an arbitrary positive constant
 factor), without other reference to the structure of $E$.''
\citep{Birnbaum62}
\end{quote}

The law of likelihood was formulated by Hacking as:

\begin{quote}
``If hypothesis $A$ implies that the probability that a random variable $X$ takes the value $x$ 
is $p_A(x)$, while hypothesis $B$ implies that the probability is $p_B(x)$, then the observation $X=x$
is \textbf{evidence supporting $A$ over $B$} if and only if $p_A(x) > p_B(x)$, and the likelihood ratio, 
$p_A(x)/p_B(x)$, measures the strength of that evidence.''
\citep{Hacking65}\footnote{Emphasis added}
\end{quote}

The divergence between the likelihood school and frequentist school is due to the later rejecting the
likelihood principle (although it accepts the principles of sufficiency and conditionality). On the other 
hand, the Bayesian school generally disagrees with the Likelihoodists by the interpretation of the law of
likelihood.

The likelihood school of thought can be seen as an intellectual heir to the Neyman-Pearson hypothesis
testing framework, as its breaking point from Fisherian significance testing happens because of the
logical necessity of comparing competing hypothesis. Fisher's tests give a special privileged position
to the so-called {\em null hypothesis}, which Neyman and Pearson will see as incomplete\footnote{
For a more complete and extremely didactic insight into their view, see the original paper from 
\citep{Neyman1933}}. The data
gathered by one experiment may indicate a small or a large support for one given hypothesis, but this alone
should not be considered as evidence in favor of this hypothesis before the other alternatives have
been also scrutinized. In other words, the probability or improbability of a single hypothesis should not
be used to conclude about its veracity. As the fictional detective Sherlock Holmes, by Sir Arthur Conan
Doyle famously said, ``How often have I said to you that when you have eliminated the impossible, 
whatever remains, {\em however improbable}, must be the truth?''. 

However, while Neyman and Pearson will use the likelihood ratio from several hypothesis to guide the 
{\em behavior} of the scientist in accepting or rejecting hypothesis, Royall and Edwards will separate
the concepts of (1) the degree of certainty that we have about one hypothesis; (2) the strength of evidence
that some data confers to one hypothesis over the others and (3) the course of action that should be taken
after examining such evidences. While Neyman and Pearson conflate the strength of evidence (2) with the
course of action (3), and Bayesian analysis conflate the degree of certainty (1) with the strength of 
evidence (2), the proposers of Likelihoodism propose that the role of statistical inference is to provide
{\em only} the strength of evidence \citep{Royall97}. 
The degree of certainty and course of action should take into account
a multiplicity of other factors, and should not be part of the inference. One of the greatest precursors 
of the statistical thought, Marquis Pierre-Simon de Laplace, had already foreshadowed this position with
the problem of the number of judges necessary to convict or acquit a prisoner: this decision, to Laplace,
needs to take into account not only the probability of convicting an innocent of acquitting a guilty 
prisoner, but also whether the punishment will be a fine or the death sentence\citep{Laplace1814}.

Moreover, by dissociating the strength of evidence from the decision making, Royall and Edwards are asking
the scientist to reveal the untransformed likelihood ratios derived from its data, instead of transforming
it into p-values or {\em a posterioris}, so that her peers are able to assess clearly what is the presented
evidence. By retaking the scientist role as a subjective decision maker, this proposition converges to
Fisher's, who criticized Neyman-Pearson's plan as being useful for industrial processing, but useless
to the scientific community. The likelihoodist proposal thus echoes the words of Fisher, and also the words
of the Bayesian probabilist I. J. Good:

\begin{quote}
``We have the duty of formulating, of summarizing, and of communicating our conclusions, in intelligible form, in
recognition of the right of other free minds to utilize them in making their own decisions.'' 
\citep{Fisher1955}
\end{quote}

\begin{quote}
``If a Bayesian is a subjectivist, he will know that the initial probability density varies from person to person and
so he will see the value of graphing the likelihood function for communication.'' \citep{Good76}
\end{quote}

However, there is a very important difference between the Bayesian and the Likelihoodist paradigms:
the {\em a posteriori} probabilities, constructed by the multiplication of the likelihood by an {\em a priori}
probability under the Bayesian school, is a absolute quantity. It is thus possible to refer to the {\em
a posteriori} probability of event $A$, or the {\em a posteriori} probability of an event $B$. In contrast,
the Likelihoodist approach will see in the likelihood ratio the end point for the statistical analysis. 
This is why the law of likelihood is expressed over {\bf evidence supporting $A$ over $B$}: because the
likelihoodist approach does not allow for the transformation of the relational support of $A$ over $B$ to
be translated into non-relational quantities of support for each hypothesis\footnote{See the text by
\citep{Fitelson07} for more details on this matter, including Carnap's classification of confirmation}.

\section{The challenge for uncertainty estimation}
From this section, we will consider problems where $\bu{x}$ represent a vector of data obtained independently
from one or more experiments from a vector-valued random variable $\bu{X}$, such that $P(\bu{X}\!=\!\bu{x}) = f(\bu{x};\theta)$. The parameter vector $\theta$ is unknown and may assume values in $\Theta$. We will
refer to the likelihood of $\theta$ given the observed data $\bu{x}$ as 
$\mathcal{L} (\theta | \bu{x}) = f(\bu{x}; \theta)$. The likelihood ratio over the same 
data set can be abbreviated as
$L(\theta_1, \theta_2) = \frac{\mathcal{L}(\theta_1|\bu{x})} {\mathcal{L}(\theta_2|\bu{x})}$. 

In order to perform an uncertainty analyses for a mathematical model under the likelihood paradigm, the 
general question that we would like to answer can be written as: ``how much support does the collected data
provide for the concurrent hypothesis over the model results?''. For example, in a population growth model,
the question may become ``how much support the collected data provide for the hypothesis that
the population is growing at the rate of 1 individual per year, {\em versus} the hypothesis that it is stable
?''\footnote{As we will see, the question can become more general, such as ``how much support the collected
data provide for the hypothesis that the population is growing or stable {\em versus} the hypothesis that is
declining?''}

This question, formulated over the parameters of a probability distribution, should be 
answered by means of the likelihood function. It can be demonstrated that one-to-one functions over the
parameters preserve the likelihood function, such that if $\mathcal{L}(\theta|\bu{x})$ is the likelihood
of $\theta$ and $\phi = f(\theta)$ is given by a one-to-one function $f(\cdot)$, the 
likelihood of $\phi$ is given by 
$\mathcal{L}(\phi|\bu{x}) = \mathcal{L} \left(f(\theta)|\bu{x}\right)$ \citep{Edwards72}.

However, the application of a mathematical model is analogous to the application of a generic function, 
and the reasoning above does not hold for arbitrary many-to-one functions. Thus, defining a methodology
for uncertainty analyses of arbitrary models is closely related to the problem of composite statistical
hypotheses. However, the law of likelihood is expressed in terms of simple statistical hypothesis, and there
is no way of combining these values to form the ``likelihood of a composite hypothesis''. As Edwards writes:

\begin{quote}
``No special meaning attaches to any part of the area under a likelihood curve, or to the sum of the likelihoods of two or more
hypotheses (...).
Although the likelihood function, and hence the curve, has the mathematical form of a [known] distribution, it does not
represent a statistical distribution in any sense.'' \citep{Edwards72}
\end{quote}

How can we work with composite hypothesis then? The traditional answer from the proposers of Likelihoodism is
that we don't. Edwards puts off composite hypothesis as ``uninteresting to science'', while Royall sees
this restriction to simple hypothesis as an advantage, not a limitation, to the likelihood approach.
However, this is necessary if we want to derive an uncertainty estimation for model results.
Here, we will present three paths that may be pursued for this end:

\begin{enumerate}
	\item Formulate a new law of likelihood, which can be applied for composite hypothesis;\label{i:alt}
	\item Present a methodology based on a logical extension of the law of likelihood;\label{i:ext}
	\item Address the issue using a model similar to the one applied for {\em nuisance parameters}. \label{i:prof}
\end{enumerate}

Proposal \ref{i:alt} is based on the axiomatic formulation of a generalized law of likelihood (GLL) that
is compatible with the law of likelihood (LL) for simple hypothesis for simple hypothesis, but able to extend
its domain for composite hypothesis. And proposal \ref{i:ext} is based on the use of a weaker definition
of statistical evidence, based on a weak law of likelihood (WLL). We will review some existing results related
to these proposals in the following sections.

Proposal \ref{i:prof} is based on methods such as profiled likelihood, which are {\em ad hoc} methods
widely used to reduce the study of multiparametric statistical models to visualizing one parameter at a time.
A typical use for these methods is exemplified by the problem of fitting a normal distribution over some data.
While, strictly, the comparable hypothesis must be given by pairs $(\mu = \mu_0, \sigma = \sigma_0)$,
representing one value for the mean and on for the variance of the distribution, it is possible to
compare the approximate support for different values for the mean, while leaving the variance free to assume
any viable value - thus treating the composite hypothesis $(\mu = \mu_0, \sigma \geq 0)$. We will get back
to this proposal in section \ref{sec:plue}.

\section{GLL: A general law of likelihood}\label{sec:GLL}

The most straightforward way to pursue a general law of likelihood is by considering the maximum\footnote{
Or more precisely, the {\em supremum}, as the set of likelihoods does not need to have a maximum} of the
likelihoods of the simple hypothesis as the likelihood of the composite hypothesis. This is precisely the
path followed by \cite{Zhang09, Zhang13} and \cite{Bickel10}. We will follow the more didactic approach by
Zhang. Let us consider two hypothesis, $H_1 : \theta \in \Theta_1 \subset \Theta$ versus 
$H_2 : \theta \in \Theta_2 \subset \Theta$. Now we postulate the following two axioms\footnote{The notation
was sightly altered for coherency with the other sections}:

\begin{axiom}
	If $\inf \mathcal{L}(\Theta_1 | \bu{x}) > \sup \mathcal{L} (\Theta_2 | \bu{x})$, then the observation
  $\bu{x}$ is evidence supporting $\Theta_1$ over $\Theta_2$.\label{ax:inf}
\end{axiom}

\begin{axiom}
	If $\bu{x}$ is evidence supporting $H_1^*$ over $H_2$ and $H_1^*$ implies $H_1$, then 
  $\bu{x}$ is evidence supporting $H_1$ over $H_2$.\label{ax:coh}
\end{axiom}

The first axiom ensures that if the images of $\mathcal{L} (\Theta_1|\bu{x})$ and 
$\mathcal{L} (\Theta_2|\bu{x})$ are disjoint intervals, or in other words, if every simple hypothesis
that implies $H_1$ is supported over every simple hypothesis that implies $H_2$, then $H_1$ is supported
over $H_2$. There seem to be no reason to reject this axiom, other than a predisposition to ignore composite
hypothesis. The second axiom is used to ensure a form of logical coherence. It is important to stress that
this coherence axiom is {\em not} used to suppose that the logic structure of the hypothesis is somehow
preserved in the likelihood function. In particular, if $H_1^*$ implies $H_1$, the axiom \ref{ax:coh} does
not warrant the assertion that $H_1^*$ is better supported than $H_1$.

From these axioms, we can derive the following general law of likelihood (GLL):

\begin{theorem}
	\textbf{(GLL)} ``If $\sup \mathcal{L} (\Theta_1 | \bu{x} ) > \sup \mathcal{L} (\Theta_2 | \bu{x})$, then
  there $\bu{x}$ is evidence supporting $H_1$ over $H_2$.'' \citep{Zhang09}
\end{theorem}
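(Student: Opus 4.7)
The plan is to derive the GLL directly from Axioms \ref{ax:inf} and \ref{ax:coh} by a two-step reduction. First, I would cut $H_1$ down to a carefully chosen \emph{singleton} sub-hypothesis $H_1^*:\theta=\theta^*$, on which Axiom \ref{ax:inf} collapses into a statement about a single likelihood value and therefore becomes directly applicable. Then I would invoke Axiom \ref{ax:coh} to lift the evidential conclusion back from $H_1^*$ to the full composite hypothesis $H_1$.

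For the first step, write $s_1 = \sup \mathcal{L}(\Theta_1|\bu{x})$ and $s_2 = \sup \mathcal{L}(\Theta_2|\bu{x})$, with the hypothesis $s_1 > s_2$. By the definition of supremum, for every $\delta>0$ there exists some $\theta \in \Theta_1$ with $\mathcal{L}(\theta|\bu{x})>s_1-\delta$; taking $\delta = s_1-s_2 > 0$ exhibits a concrete $\theta^* \in \Theta_1$ with $\mathcal{L}(\theta^*|\bu{x}) > s_2$. The one small subtlety here is that the hypothesis is a \emph{strict} inequality and that one is working with a supremum rather than a maximum; the proof must not tacitly assume that the sup over $\Theta_1$ is attained, and the strictness of $s_1>s_2$ is exactly what avoids that assumption.

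Having singled out $\theta^*$, on the singleton $\Theta_1^* = \{\theta^*\}$ we have $\inf \mathcal{L}(\Theta_1^*|\bu{x}) = \mathcal{L}(\theta^*|\bu{x}) > s_2 = \sup\mathcal{L}(\Theta_2|\bu{x})$, so Axiom \ref{ax:inf} applied to $\Theta_1^*$ versus $\Theta_2$ yields that $\bu{x}$ is evidence supporting $H_1^*$ over $H_2$. Since $\theta^* \in \Theta_1$, the proposition $H_1^*$ logically implies $H_1$, and Axiom \ref{ax:coh} then delivers the desired conclusion that $\bu{x}$ is evidence supporting $H_1$ over $H_2$.

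I expect the main obstacle to be conceptual rather than technical: being clear that Axiom \ref{ax:coh} is the step doing the genuine logical work, and that it is being used to transport an evidential comparison from the singleton $H_1^*$ up to the composite $H_1$ \emph{without} asserting any evidential ordering between $H_1^*$ and $H_1$ themselves --- precisely the distinction the authors flag in the paragraph following Axiom \ref{ax:coh}. Once that point is kept in mind, the rest of the argument is essentially a careful unpacking of the definition of supremum.
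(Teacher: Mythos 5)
Your proof is correct and follows exactly the same route as the paper's: extract a simple hypothesis $\theta^*\in\Theta_1$ with likelihood exceeding $\sup\mathcal{L}(\Theta_2|\bu{x})$, apply Axiom \ref{ax:inf} to the singleton, then lift to $H_1$ via Axiom \ref{ax:coh}. The only difference is that you make explicit the $\delta=s_1-s_2$ supremum argument and the non-attainment subtlety, which the paper's proof leaves implicit.
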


\begin{proof}
	Consider $\sup \mathcal{L} (\Theta_1 | \bu{x} ) > \sup \mathcal{L} (\Theta_2 | \bu{x})$. 
  Then there exists $\theta_1 \in \Theta_1$
	so that $\mathcal{L} (\theta_1 | \bu{x}) > \sup \mathcal{L} (\Theta_2 | \bu{x})$. 
  From the axiom \ref{ax:inf}, the hypothesis
	$H_1^* : \theta = \theta_1$ is supported over $H_2$. But $H_1^*$ implies $H_1$, so the conclusion
  follows directly from axiom \ref{ax:coh}.
\end{proof}

The use of a ``generalized likelihood ratio'' given by $\sup \mathcal{L}(\Theta_1 | \bu{x}) / 
\sup \mathcal{L} (\Theta_2 | \bu{x}) $ seems natural to quantify the strength of evidence, just as the
likelihood ratio is used to quantify the strength of evidence under the law of likelihood (LL).
It is trivial to see that the GLL is compatible with LL in the case of simple hypothesis, and with some
particular cases proposed by \citep{Royall97}. However, it is striking to see that this formulation of the 
GLL allows for an {\em absolute} degree of hypothesis confirmations. One hypothesis 
$H_1: \theta \in \Theta_1$ can be absolutely supported if $\sup \mathcal{L}(\Theta_1|\bu{x}) > 
\sup \mathcal{L} (\Theta_1^c|\bu{x}$, where $^c$ represents the complimentary set.
That is equivalent to $L(\Theta_1, \Theta_1^c) > 1$. This absolute likelihood will always be zero
for simple hypothesis concerning real values parameters, but this does not happen then $\Theta$ is finite.

On the other hand, accepting the generalized likelihood ratio makes it impossible for a researcher
to combine evidence from multiple data sources, as the generalized likelihood is not a multiplicative
quantity. So, the greatest problem here is to find a generalization that allows for a measure of
strength of evidence that can be combined across experiments.

\section{WLL: A weak law of likelihood}

A more general way of expressing statistical support is given by 

\begin{description}
\item[WLL]
``Evidence $E$ favors hypothesis $H_1$ over hypothesis $H_2$ {\em if} $P(E|H_1) > P(E|H_2)$ {\em and}
$P(E|\sim\!\!H_1) \leq P(E|\sim\!\!H_2)$.''
\citep{Fitelson07}
\end{description}

Evidently, LL $\implies$ WLL. However, WLL is also implied by most modern Bayesian proponents. Given a 
confirmation measure $c(H,E)$, the Bayesian equivalent for the LL is:

\begin{description}
\item[$\dagger_c$]
``Evidence $E$ favors hypothesis $H_1$ over hypothesis $H_2$ according to measure $c$ if and only if
  $c(H_1,E) > c(H_2,E)$.''
\citep{Fitelson07}
\end{description}

There are three main formats for the confirmation measures:
\begin{description}
\item[Difference] $d(H,E) = P(H|E)-P(H)$
\item[Ratio] $r(H,E) = \frac{P(H|E)}{P(H)}$
\item[Likelihood ratio] $l(H,E) = \frac{P(E|H)}{P(E|\sim\!\!H)}$
\end{description}

It is important to remember that the Bayesian confirmation given by $c(H,E)$ has a non-relational nature,
and the relation confirmation is constructed by comparing two of those measures. Likelihoodists will see
in the LL a primitive relational relation. While the construction of the non-relational confirmation measures
depends on the specification of {\em a prioris}, the WLL makes no such assumptions, using only {\em catch-all}
probabilities (the $P(E|\sim\!\!H)$ terms). One theory based on the WLL without invoking {\em a prioris}
could be used as a base to an intermediate inference school of thought, one that is not dependent on the
specification of {\em a prioris}, but without the arbitrary restriction on the use of composite hypothesis.

\section{PLUE: a proposal for likelihood profiling}\label{sec:plue}

In this section, we will describe how the profiling techniques for likelihood functions can be 
seen as a foundation for a methodology of uncertainty estimation for mathematical models.
We reason that this methodology is intuitively appealing under a likelihood paradigm; and we will
refer to it as PLUE - Profiled Likelihood Uncertainty Estimation.

\subsection{Intuition}

Suppose that we have one mathematical model, for example a structured population growth model
like those exemplified in \citep{Chalom12}, and one data set from which we can estimate the 
vital rates of the population, such as survival, growth and fertility rates.

With this data in hands, we would like to ask the following questions:

{\em ``What is the support given by the data to the affirmation that the population is stable {\em versus}
that it is declining? What is the support given by the data to the affirmation that the population will
become extinct in up to 10 years {\em versus} more than 10 years?''}

These questions cannot be answered under a strict Likelihoodist point of view, as they refer to
composite hypothesis over the parameters. However, we can take the maximum likelihood point as 
privileged and restrict the question to the form:

{\em ``What is the support given by the data to point of maximum likelihood {\em versus} any point in which
the population growth is negative?''}

Schematically, we can see in figure \ref{fig:esquema} that we are comparing the likelihood at the global
maximum ($x_{max}$) with the maximum value attained at a distinct region ($x_{lim}$). This kind of question
is now comparing one simple hypothesis, corresponding to the maximum likelihood, with a composite hypothesis.
This kind of comparison avoids the problems and quirks that afflict the general laws of likelihood due to
the superposition of likelihood intervals. In terms of the Zhang axioms (see section \ref{sec:GLL}),
we are accepting a weak form of the first, but not the second.

This reasoning can be extended to the following:

{\em ``Which are the points in the parameter space for which the support given by the data 
{\em versus} the point of maximum likelihood is greater than a certain threshold $\delta$?''}

By asking the same question for different values of $\delta$, we are effectively profiling the 
likelihood surface for the parameter space. In this point, we need to remember that profiling 
methods for likelihood are used in inference since the decade of 1970 to reduce the dimensionality
of parameters spaces, in particular in the form of elimination of ``{\em nuisance parameters}''.
Even if the status of this procedures is not fully understood under the likelihood
paradigm (see for example \citep{Kalbfleisch70} for the relation between nuisance parameters and fiducial
inference), these methods are generally accepted as a valid procedure for exploratory analysis.

We have already discussed the typical example of fitting a normal distribution to a data set. In
this example, the likelihood is defined for pairs $(\mu = \mu_0, \sigma = \sigma_0)$,
but it is usual to make affirmations about the average $\mu$ of the distribution, without 
referring to $\sigma$.
However, this is equivalent to consider the model $g(\mu, \sigma) = \mu$, that is, a model that
projects the bi-dimensional space formed by $\mu$ and $\sigma$ into an unidimensional space. Our proposal
is merely to use this same reasoning for general non-invertible functions. Even if this is nothing
new under the optics of the likelihood profiling, this is the first time that this methodology is proposed
in the literature of uncertainty analyses, to the best of our knowledge.

\begin{figure}[htb]
\includegraphics{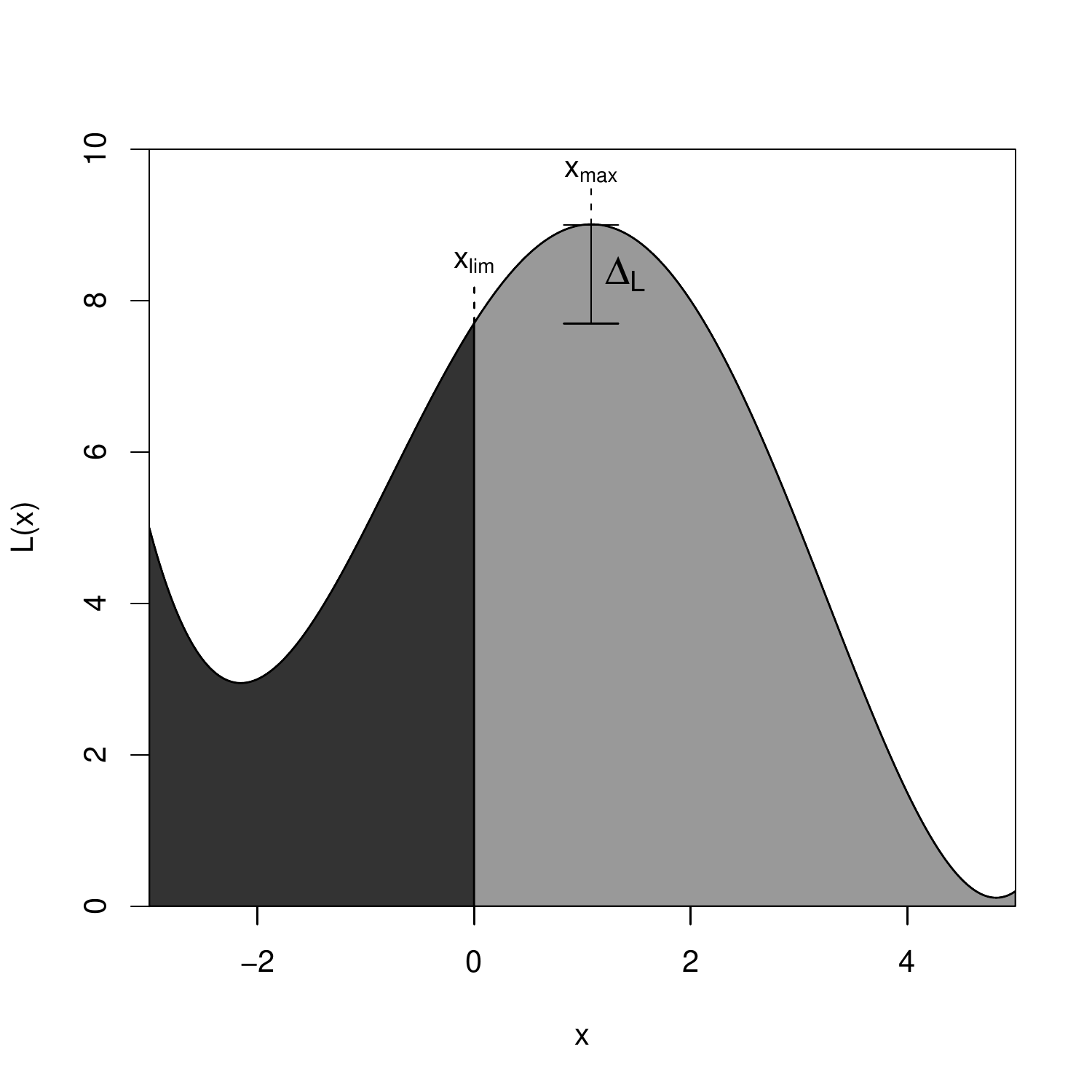}
	\caption{Schematic representation of a likelihood function, highlighting two interest points: the
  global maximum $x_{max}$ and the maximum in a given region $x_{lim}$. See details on the main text.} 
	\label{fig:esquema}
\end{figure}

\subsection{Method}
\begin{description}
	\item[Definitions]
    Consider first a mathematical model of interest. We will call this the ``biological model''\footnote{
    But it could be a physical model, hydrological model, geochemical model, etc.}
    to differentiate between this and the statistical model presented below. We will consider the simple
    case of a scalar response $y$ for a model with a vector input $\bu{\theta}$: 
		$y = F(\bu{\theta})$.

    Let's represent by $\chi$ the set of data collected in field or laboratory experiments. If we have
    $n$ parameters and $m$ observations, $\chi$ is a table with $n$ columns and $m$ rows.

    Now formulate different models to explain your data (ex: constant or grouped fertility, constant or
    decreasing growth rate, models with 4 or 5 size classes, etc.). These will be the statistical models.
    Write the likelihood function $\mathcal{L}(\bu{\theta}|\chi)$ for each model, find the best
    set of parameters that fits your data and determine the value of AIC for each statistical model
    (see for example \citep{Burnham02} for details in this step). 
	\item[Sampling]
    Take the model with the best AIC, and use a Monte Carlo sampling to generate a large number of 
    samples with density proportional to 
    $\mathcal{L}(\bu{\theta}|\chi)$ (for example, use the Metropolis algorithm \citep{Tierney94}). 
    We will call this sample $\bu{A}$. To each element in 
    $A_{i \cdot}$ (taken from $\mathcal{L}(\bu{\theta}|\chi)$), we have associated the value of 
		$L_i = \mathcal{L}(A_{i \cdot} | \chi)$, the likelihood of this sample, and we take
		$Y_i = F(A_{i \cdot})$, the result of the biological model over this sample.
		Normalize $L_i$ to have the minimum log-likelihood in $0$.
	\item[Aggregation]
    Now we take the model results $Y_i$ and associated likelihoods $L_i$, and create an upper profile
    in the following manner: for each increment $z$, find the largest value 
		$\bar y$ in $Y_i$ such that $L_i \leq z$. Note down this value as $P_{sup}(z) = \bar y$,
    and repeat for a larger value of $z$.

    Proceed in an analogous fashion to build the lower likelihood profile. Both profiles, taken together,
    can be used to investigate the plausibility regions of $y$ under $\mathcal{L}(\bu{\theta}|\chi)$.
\end{description}

\section[Case study]{Case study: a minimal model for population growth}

Here, we will study a structured population model that be considered minimal:
the life stages are non-reproducing juveniles and reproductive adults. This example 
is presented by \citep{Caswell08} as a simple example of the analytical sensitivity analysis. The model
is described by the following matrix:

\begin{equation}
 A = \left[
 \begin{array}{ll}
	 \sigma_1 (1-\gamma) &   f \\
     \sigma_1 \gamma & \sigma_2
 \end{array}
 \right]
\end{equation}

Here, $\sigma_1$ is the probability of juvenile survival, $\sigma_2$ is the probability of adult survival,
$\gamma$ is the maturation probability and $f$ is the adult fertility. We will represent by $\lambda$ the
largest eigenvalue of this matrix, and consider this as the model response.

To simplify some calculations, let's presume that the survival rate does not depend on the life stage
($\sigma_1=\sigma_2=\sigma$). We will also suppose that it is possible to unequivocally mark the juveniles
that were born in the last time interval, and which passed through the maturation process to become adults
in the last time interval. The model becomes:

\begin{equation}
 A = \left[
 \begin{array}{ll}
	 \sigma (1-\gamma) &   f \\
     \sigma \gamma & \sigma
 \end{array}
 \right]
\end{equation}

For large animals with one offspring for reproductive seasons, such as whales, $f$ can be modeled as the
proportion of adults that generate offspring. The parameter $\sigma$ is modeled as the proportion of 
individuals that survive from one time interval to another, and $\gamma$ as the proportion of juveniles that
maturate from one time interval to another. This way, all of the three parameters can be modeled as
binomial distributions with unknown probabilities 
$\theta_i$ and number of trials given by 
$n_1$, the number of juveniles, $n_2$, the number of adults, and $n_t$, the total population size:

\begin{align}
	\gamma & \sim \operatorname{binom}(\theta_1, n_1) \\
	f      & \sim \operatorname{binom}(\theta_2, n_2) \\
	\sigma & \sim \operatorname{binom}(\theta_3, n_t = n_1+n_2)
\end{align}

We will also assume that the parameters are independent, to arrive at the likelihood functions graphed in
\ref{fig:LikFunc} (mathematical details are presented in section \ref{apmat}).

We will examine a numerical example with the initial population having 10 juveniles and15
adults. The population size is very small, which is important to highlight the difference between the
analytical and stochastic approaches. After one time step, we can see 
3 recent adults, 2 born juveniles and 23 total survivors. It's 
easy to see from the figure \ref{fig:LikFunc} that the best estimate for this parameters is given by
$\sigma = $ 0.92, $f = $ 0.13 and $\gamma = $ 0.3.
Also, the corresponding value for $\lambda$ is 1.02.

\begin{figure}
	\caption{Likelihood function for each parameter in the model. Black = $\sigma$, red = $f$ and green = $\gamma$.}
	\label{fig:LikFunc}
\end{figure}

The likelihood functions for each parameter were used to generate
50000 samples using Metropolis algorithm, from which we derive an empirical distribution for
$\lambda$, which is proportional to likelihood of the parameters.
This $\lambda$ distribution, together with the likelihood values associated with each input, was used to
construct a likelihood profile for the model result. The minimum likelihood for 
$\lambda$ is attained in $\lambda = $ 1.02.

Figures \ref{fig:lambdascatter} and \ref{fig:lambdaprcc} show preliminary results of the application of
sensitivity techniques (such as described in \citep{Chalom12}) to the generated samples. It is important
to highlight that this analyses have been carried out in a neighborhood of the maximum likelihood point
that is not infinitesimal, nor arbitrary, which would be the case in analytical and other stochastic methods.

\begin{figure}
\includegraphics{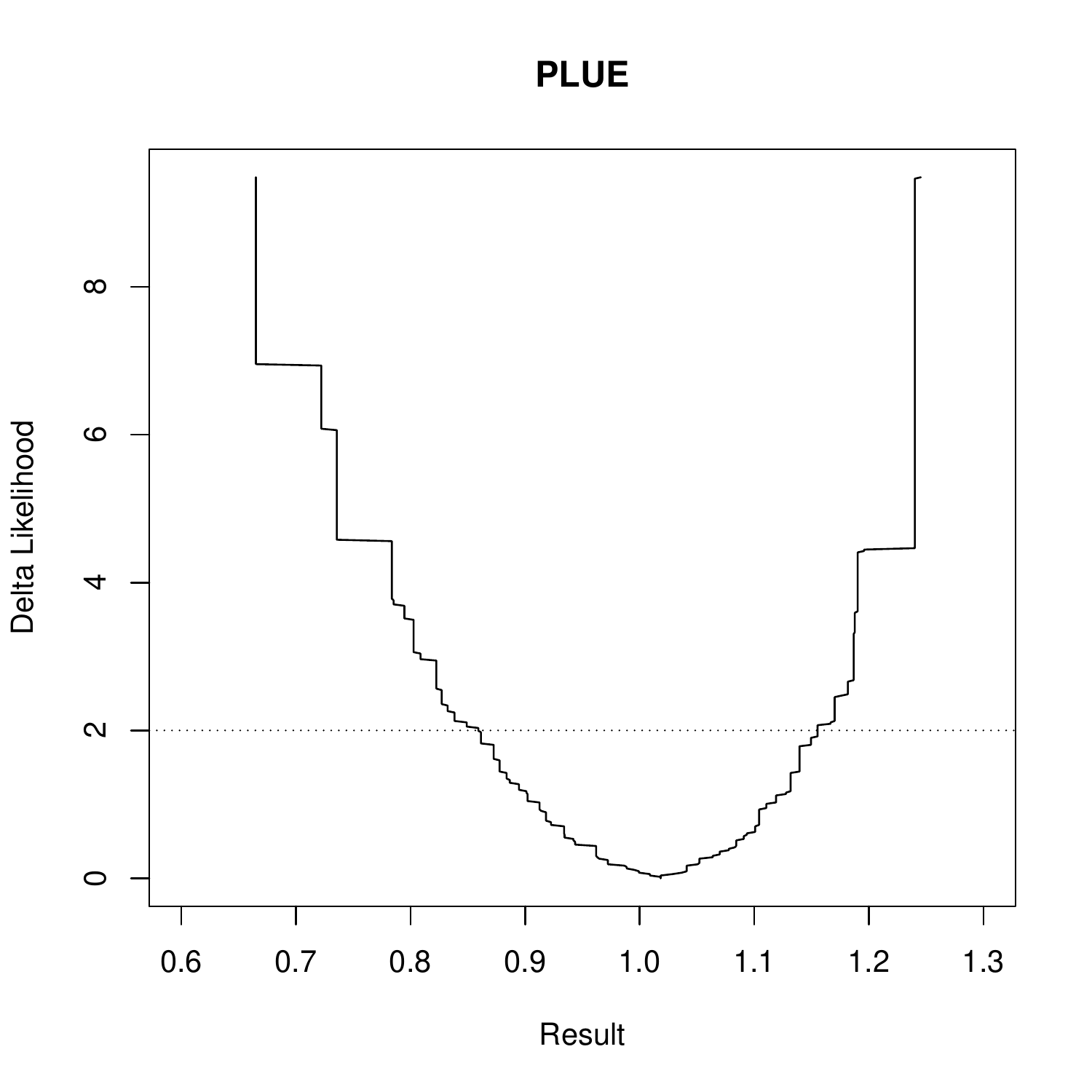}
	\caption{Likelihood profile for the results of a minimal model of structured population growth.}
	\label{fig:lambda}
\end{figure}
\begin{figure}
\includegraphics{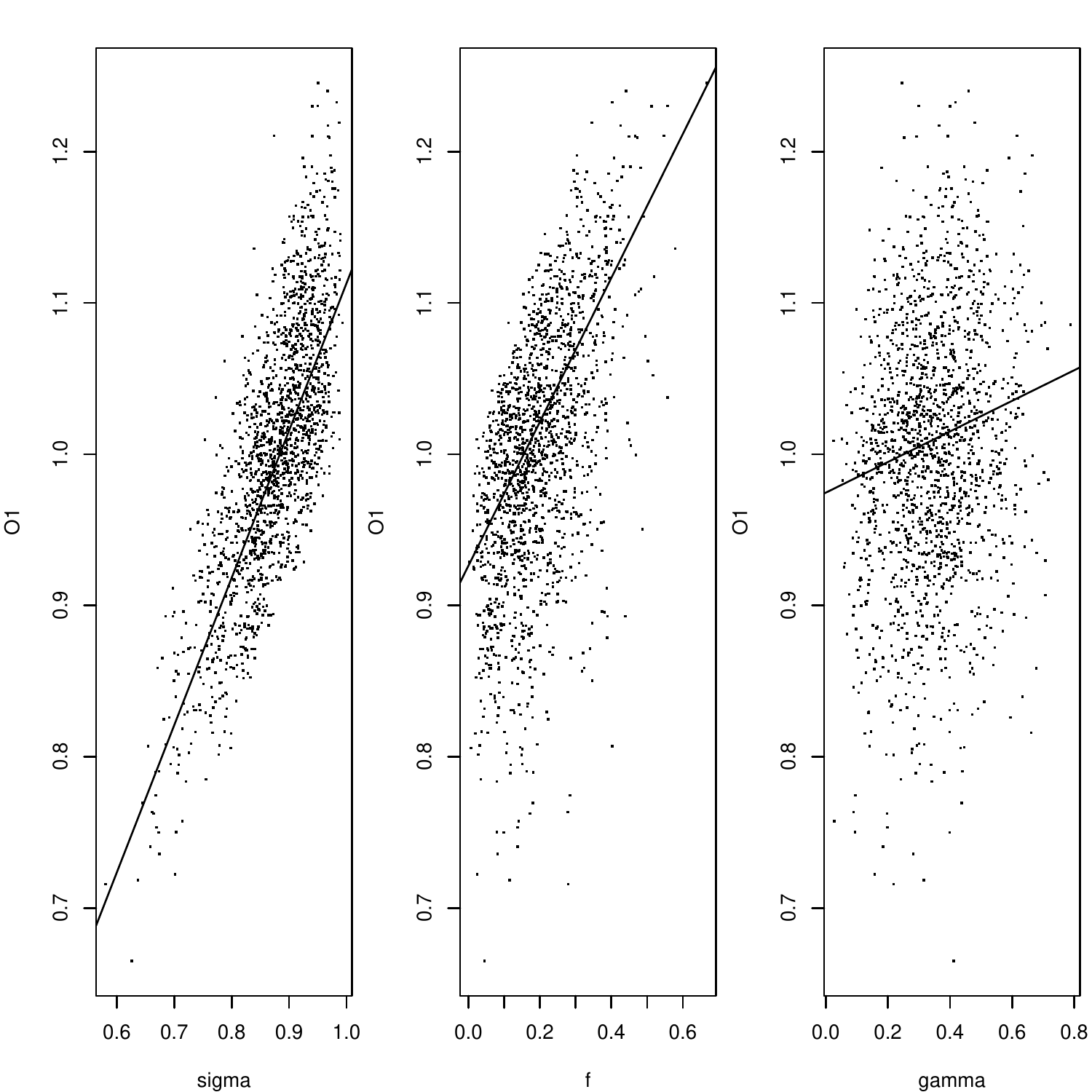}
	\caption{Graph of the dispersion of parameter values (in the x axis) and results of the 
  minimal model of structured population growth, generated by a likelihood profile method.  }
	\label{fig:lambdascatter}
\end{figure}
\begin{figure}
\includegraphics{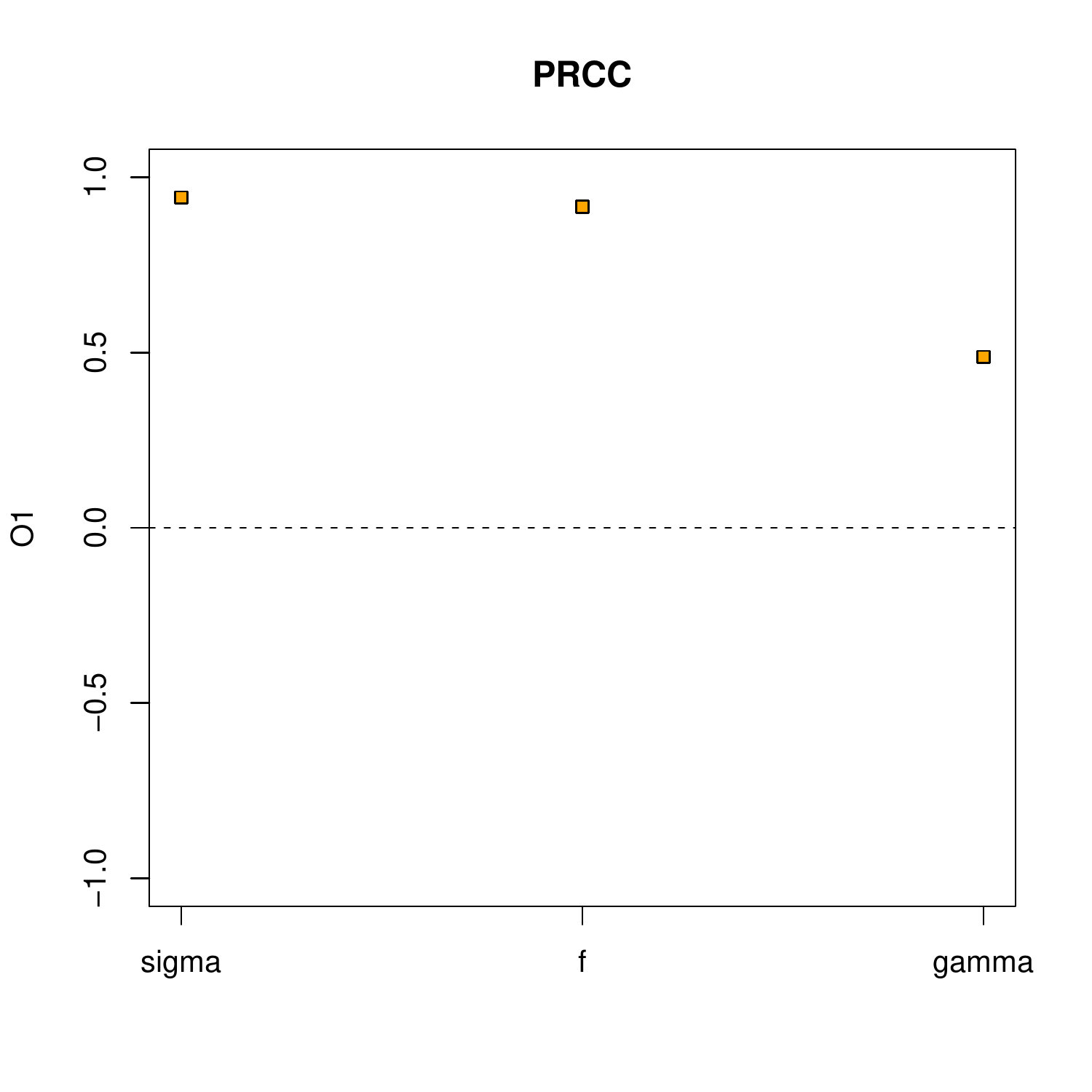}
	\caption{Partial Rank Correlation Coefficient (PRCC) between the input and output values for a 
  minimal model of structured population growth.}
	\label{fig:lambdaprcc}
\end{figure}

The analysis shown indicates that the estimated value of $\lambda$ is very unreliable, presenting a
very open profile. It is important to notice that this happens due to the small sample size. If we consider
a sample in which all the observations are 3 times larger, but keeping all the proportions,
the analysis results in a more closed profile (fig. \ref{fig:lambda2}).

\begin{figure}
\includegraphics{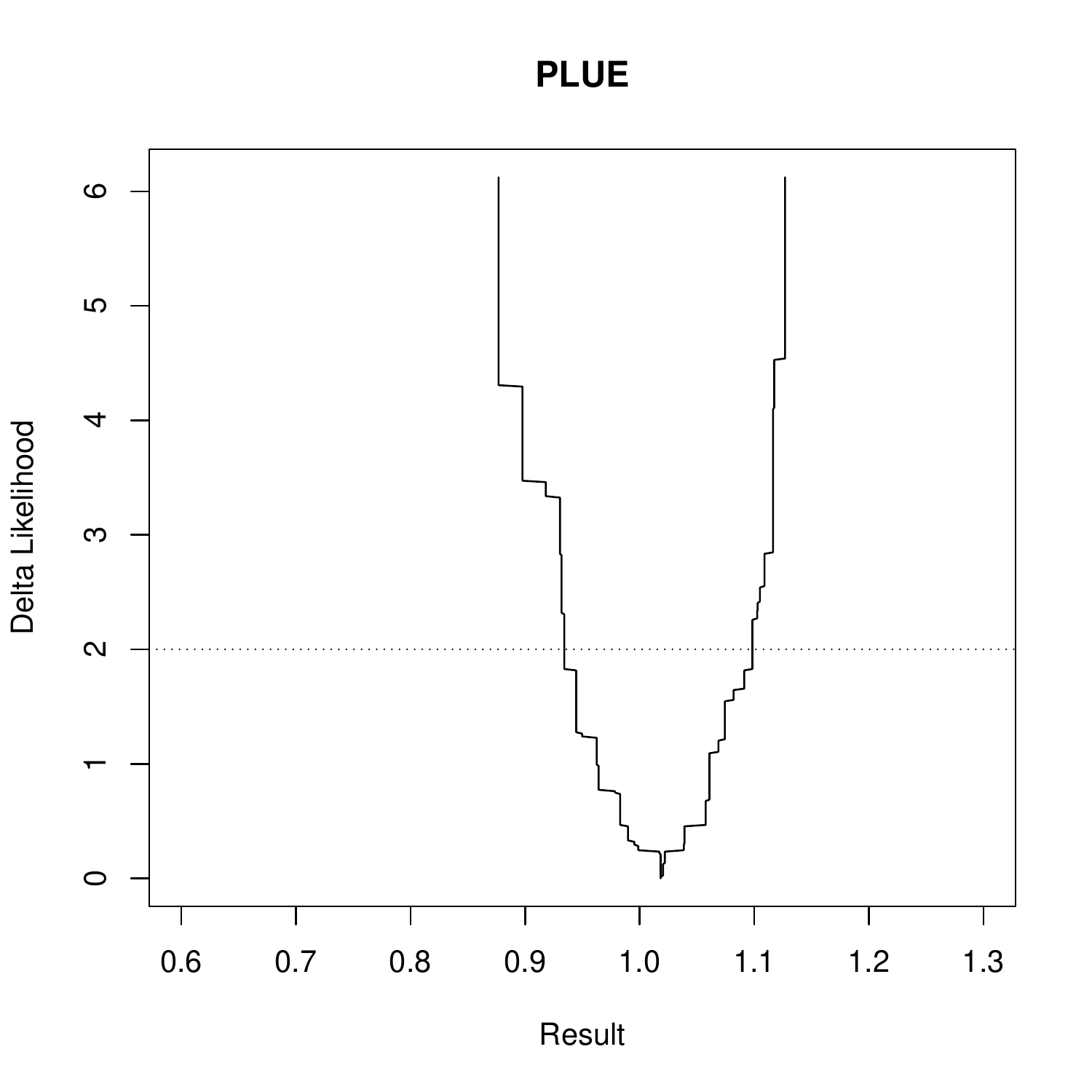}
	\caption{Likelihood profile for the results of a minimal model of structured population growth, but with
  larger sample size. In comparison to \ref{fig:lambda}, this profile is more closed and the plausibility
  regions are narrower. }
	\label{fig:lambda2}
\end{figure}

\newpage
\subsection{Mathematical details}\label{apmat}
In this section, we will detail some mathematical points left out in the example above.

For a given observed number of juveniles that have undergone maturation, becoming adults in the last time
interval, $x_A$, we can write the log-likelihood function for $\gamma$ as:

\begin{equation}
\mathcal{L} \left( \theta_1 | x_A \right) 
= \log \left( {n_1 \choose x_A} \theta_1^{x_A} (1-\theta_1) ^{n_1-x_A} \right)
\end{equation}

In the same way, the log-likelihood for $f$ is related to the number of juveniles that were born in the
last time interval, $x_J$, by the expression:

\begin{equation}
\mathcal{L} \left( \theta_2 | x_J \right) 
= \log \left( {n_2 \choose x_J} \theta_2^{x_J} (1-\theta_2) ^{n_2-x_J} \right)
\end{equation}

And last, the $\sigma$ log-likelihood function is given over the number of surviving individuals $x_S$
(or the total population minus $x_J$):
\begin{equation}
\mathcal{L} \left( \theta_3 | x_S \right) 
= \log \left( {n_t \choose x_S} \theta_3^{x_S} (1-\theta_3) ^{n_t-x_S} \right)
\end{equation}

With the strong assumption that the three variables are independent, we can write $n_t = n_3$ and
$\{x_A, x_J, x_S\} = \{x_1, x_2, x_3\}$ in order to arrive at the more economical notation\footnote{
It is important to remember that $x_1$ is not the number of juveniles, and so on.}
for the likelihood function over the entire parameter vector $\boldsymbol\theta$:

\begin{equation}
\mathcal{L} \left( \boldsymbol{\theta} | \mathbf{x} \right) 
= \sum_i \log \left( {n_i \choose x_i} \theta_i^{x_i} (1-\theta_i) ^{n_i-x_i} \right) \label{eqn:loglik}
\end{equation}

The result of the model is $\lambda$, the largest eigenvalue in $A$. For 2x2 matrices, this is given simply by:
\begin{equation}
	\det \left[ 
	\begin{array}{ll} \lambda - \sigma(1-\gamma) & -f \\
		-\sigma \gamma &         \lambda - d 
	\end{array}
	\right]
	= \lambda^2 - \lambda \operatorname{tr}(A) + \operatorname{det}(A) = 0
\end{equation}
\begin{align}
	\lambda = \frac{1}{2} \left(\operatorname{tr}(A) + \sqrt{\operatorname{tr}^2(A) - 4 \operatorname{det}(A)} \right)
\end{align}

\subsection{\R code for performing the analyses}

In this section, we will present the \R code used to generate the analyses and graphs presented above.
This can be seen as a tutorial in the usage of the computational tools provided in the ``pse'' package.
You should have installed \R 
\footnote{This tutorial was written and tested with \R version 
3.0.1, but it should work with newer versions}
along with an interface and 
text editor of your liking, and the package ``pse''
(available on http://cran.r-project.org/web/packages/pse).

\subsubsection{Biological and statistical models}
First, we should define our interest model. We will refer to this model as the biological model\footnote{
Because the models of interest in my research are biological. It can also be a physical model, 
geochemical model, etc.} to distinguish this from the statistical model we will be using to 
estimate likelihoods. This model 
must be formulated as an \R function that
receives a {\em data.frame}, in which every column represent a different
parameter, and every line represents a different combination of values
for those parameters. The function must return an array with the same
number of elements as there were lines in the original data frame,
and each entry in the array should correspond to the result of running
the model with the corresponding parameter combination. For example, it can be this\footnote{See the preceding
sections for the interpretation of the model, including parameters and results}:

\begin{Schunk}
\begin{Sinput}
> tr <- function (A) return(A[1,1]+A[2,2])
> A.to.lambda <- function(A) 
+   1/2*(tr(A) + sqrt((tr(A)^2 - 4*det(A))))
> getlambda = function (sigma, f, gamma) {
+   A.to.lambda (matrix(c(sigma*(1-gamma), f, 
+                         sigma*gamma, sigma), 
+                       ncol=2, byrow=TRUE) )
+ }
> getlambda = Vectorize(getlambda)
> model <- function(x) getlambda(x[,1], x[,2], x[,3])
\end{Sinput}
\end{Schunk}

Following the definition of the model, we should define the likelihood function for our parameters.
To do this, we can formulate and test several statistical models. In order to fit competing models to the 
data and select the best of them, we recommend using the \R package \textbf{bbmle}.
Then, the best model should be written as a function receiving a numeric vector representing one realization
of the parameter vector and returning the {\em positive} log-likelihood of that vector.

For example, the best model may be that the parameters $\sigma$, $f$ and $\gamma$ are all independent 
from each other, coming from the specified binomial distributions:

\begin{Schunk}
\begin{Sinput}
> # Initial population: juveniles, adults, total
> n <- c(10, 15); n.t <- sum(n)
> # Observed quantities: matured, born, survival
> obs <- c(3, 2, 23)
> # Best guess for the parameters
> sigma <- obs[3]/n.t
> f <- obs[2]/n[2]
> gamma <- obs[1]/n[1]
> lambda <- getlambda(sigma, f, gamma)
\end{Sinput}
\end{Schunk}

The likelihood function, in this case, should be:

\begin{Schunk}
\begin{Sinput}
> # NOTE: LL function uses GLOBAL obs and n!!!
> LL <- function (x) 
+ {
+   t <- dbinom(obs[3], n.t, as.numeric(x[1]), log=TRUE) +
+     dbinom(obs[2], n[2], as.numeric(x[2]), log=TRUE) +
+     dbinom(obs[1], n[1], as.numeric(x[3]), log=TRUE)
+   if (is.nan(t)) return (-Inf);
+   return(t);
+ }
\end{Sinput}
\end{Schunk}

Please note that this function uses the global variable {\em obs}, and that it return minus infinity instead
of not-a-number in cases where the likelihood is not properly defined. This can happen, for instance, if
any of the values of {\em x} is negative.

\subsubsection{Profiling: sampling and aggregating the results}
After carefully constructing the model of interest and the likelihood function, as described in the 
previous section, performing the PLUE analysis is simply a matter of calling the {\em PLUE} function.
This function performs three steps. First, it performs a Monte Carlo sampling of the likelihood function
in order to generate a large sample from the likelihood distribution. Then, the biological model is applied
to this sample, and finally the model results are combined by means of profiling the likelihood function
associated with each data point.

The {\em pse} package implements a simple Metropolis sampling function that can be used
by setting {\em method=`internal'} in the {\em PLUE} function call. For more elaborate sampling schemes, 
and more control over the process, we recommend using {\em method=`mcmc'}, which uses the {\em mcmc} \R 
package.

\begin{Schunk}
\begin{Sinput}
> library(pse)
> factors = c("sigma", "f", "gamma")
> N = 50000
> # we set the random seed to ensure reproducibility:
> set.seed(42)
> # The starting point for the Monte Carlo sampling
> start = c(sigma, f, gamma)
> plue <- PLUE(model, factors, N, LL, start, 
+              method="mcmc", opts=list(blen=10), nboot=50)
\end{Sinput}
\end{Schunk}

In order to see the profiled likelihood of the model result, simply run:
\begin{Schunk}
\begin{Sinput}
> plot(plue, xlim=c(0.6, 1.3))
\end{Sinput}
\end{Schunk}

Notice that additional parameters may be specified for the underlying plotting function, such as
{\em xlim}, {\em ylab} or {\em main}. Additional plots may be seen by using the 
{\em plotscatter} and {\em plotprcc} functions:
\begin{Schunk}
\begin{Sinput}
> # Sensitivity analyses over lambda
> plotscatter(plue)
> plotprcc(plue)
\end{Sinput}
\end{Schunk}

The interpretation of these graphs is analogous to those generated by the Latin Hypercube
Sampling, as described in \citep{Chalom12}. However, it is important to notice that, instead of the 
arbitrary region of the parameter space that is sampled in the LHS scheme, the plots 
presented in this document are representing a discretization of the likelihood surfaces
of the parameters, thus incorporating all the information about the data collected.

\section*{Acknowledgments}
This work was supported by a CAPES scholarship.

\newpage
\bibliographystyle{apalike}
\bibliography{chalom}
\end{document}